\newtheorem{theorem}{Theorem}[section]
\newtheorem{lemma}[theorem]{Lemma}
\newtheorem{assumption}[theorem]{Assumption}
\theoremstyle{definition}
\newtheorem{remark}[theorem]{Remark}
\newcommand{\longthmtitle}[1]{\mbox{}{\textit{(#1):}}}
\newcommand{\real}{\ensuremath{\mathbb{R}}}
\newcommand{\complex}{\ensuremath{\mathbb{C}}}
\newcommand{\Cc}{\mathcal{C}}
\newcommand{\Dc}{\mathcal{D}}
\newcommand{\Ec}{\mathcal{E}}
\newcommand{\Lc}{\mathcal{L}}
\newcommand{\Nc}{\mathcal{N}}
\newcommand{\Qc}{\mathcal{Q}}
\newcommand\abf{\mathbf{a}}
\newcommand\gbf{\mathbf{g}}
\newcommand\hbf{\mathbf{h}}
\newcommand\pbf{\mathbf{p}}
\newcommand\qbf{\mathbf{q}}
\newcommand\vbf{\mathbf{v}}
\newcommand\ybf{\mathbf{y}}
\newcommand\Abf{\mathbf{A}}
\newcommand\Ebf{\mathbf{E}}
\newcommand\Ibf{\mathbf{I}}
\newcommand\Jbf{\mathbf{J}}
\newcommand\Rbf{\mathbf{R}}
\newcommand\Xbf{\mathbf{X}}
\newcommand\Ybf{\mathbf{Y}}
\newcommand\phib{\boldsymbol{\phi}}
\newcommand\psib{\boldsymbol{\psi}}
\newcommand{\ones}{\mathbf{1}}
\newcommand{\zeros}{\mathbf{0}}
\newcommand{\oprocendsymbol}{\hbox{$\bullet$}}
\newcommand{\oprocend}{\relax\ifmmode\else\unskip\hfill\fi\oprocendsymbol}
\def\footnoterule{\kern-3\p@
  \hrule \@width 2in \kern 2.6\p@} 
\begin{document}

\title{\LARGE \bf Constraints on OPF Surrogates for Learning
Stable \\ Local Volt/Var Controllers}

\author{Zhenyi Yuan, Guido Cavraro, and Jorge Cort\'es
\thanks{This work was authored by the National Renewable Energy Laboratory, operated by Alliance for Sustainable Energy, LLC, for the U.S. Department of Energy (DOE) under Contract No. DE-AC36-08GO28308. Funding provided by the NREL Laboratory Directed Research and Development Program. The views expressed in the article do not necessarily represent the views of the DOE or the U.S. Government. The U.S. Government retains and the publisher, by accepting the article for publication, acknowledges that the U.S. Government retains a nonexclusive, paid-up, irrevocable, worldwide license to publish or reproduce the published form of this work, or allow others to do so, for U.S. Government purposes.}
\thanks{Z. Yuan and J. Cort\'es are with the Department of Mechanical and Aerospace Engineering, UC San Diego. {\tt \{z7yuan,cortes\}@ucsd.edu}. G. Cavraro is with the Power Systems Engineering Center, National Renewable Energy Laboratory. {\tt guido.cavraro@nrel.gov}.}
}

 \markboth{IEEE Control Systems Letters}%
 {Yuan \MakeLowercase{\textit{et al.}}: Constraints on OPF Surrogates for Learning Stable Local Volt/Var Controllers}


\maketitle

\begin{abstract}
We consider the problem of learning local Volt/Var controllers in distribution grids (DGs). Our approach starts from learning separable surrogates that take both local voltages and reactive powers as arguments and predict the reactive power setpoints that approximate optimal power flow (OPF) solutions. We propose an incremental control algorithm and identify two different sets of slope conditions on the local surrogates such that the network is collectively steered toward desired configurations asymptotically. Our results reveal the trade-offs between each set of conditions, with coupled voltage-power slope constraints allowing an arbitrary shape of surrogate functions but risking limitations on exploiting generation capabilities, and reactive power slope constraints taking full advantage of generation capabilities but constraining the shape of surrogate functions. AC power flow simulations on the IEEE 37-bus feeder illustrate their guaranteed stability properties and respective advantages in two DG scenarios.
\end{abstract}

 \begin{IEEEkeywords}
Local Volt/Var control, asymptotic stability.
 \end{IEEEkeywords}

\section{Introduction}\label{sec:intro}

The massive deployment of distributed energy resources (DERs) in DGs represents an opportunity to improve the performance of the power grid and reduce greenhouse gas emissions. Nevertheless, if not properly regulated, DERs'  power injections can pose challenges to system operations and stability.
For instance, the intermittence of renewable energy sources can cause large voltage variations~\cite{Mohammadi_2017_TPD}. Volt/Var control strategies aim to keep voltages within safe preassigned limits by commanding DERs' reactive power injections. 
Here, we propose to learn local Volt/Var controllers with improved optimality and rigorous performance guarantees.

\subsubsection*{Literature Review}
Generator reactive power outputs are classically computed in an \emph{open-loop} fashion as the solution of an OPF problem, see e.g.,~\cite{SHL:14}.
Learning-based approaches have recently been proposed to predict OPF solutions targeting scenarios of high DERs penetration and increased load variability, aiming to solve numerous OPF problem instances within a limited timeframe~\cite{XP-TZ-MC-SZ:21,MKS-SG-VK-GC-AB:20}.
Nevertheless, the aforesaid methods require \emph{complete} information, namely, power demands from loads and generation limits from generators must be exactly known.
This is not possible in most of the actual DGs, e.g., because individual loads are unlikely to announce their demand profiles in advance, and the evolving availability of small-size generators is hard to predict.

This has motivated the development of \emph{closed-loop} strategies, which compensate for the lack of information with measurements retrieved from the field.
Given the massive number of controllable devices hosted in future DGs, \emph{decentralized} approaches are often advocated for practical applications.
There are two notable classes of decentralized algorithms. First, we have \emph{distributed} algorithms where agents cooperate with peers. 
The literature has seen the recent development of distributed optimization-based feedback controllers, which steer the network towards OPF solutions by means of the cyclic repetition of \emph{sensing}, \emph{communication} of key variables with peers, \emph{computation} of power setpoints, and their \emph{application}, see e.g.,~\cite{ED-AS:18,AB-EDA:19,GQ-NL:20}.
Nevertheless, distributed strategies need a reliable real-time communication infrastructure, which is rarely present in actual  DGs.
Second, we have \emph{local} strategies, in which each agent makes decisions based only on information available locally, see e.g., the IEEE standard 1547~\cite{IEEE1547} and~\cite{XZ-MF-ZL-LC-SHL:21}.
Local schemes have intrinsic performance limitations and in general lack of optimality considerations~\cite{SB-RC-GC-SZ:19}.
To enhance the performance of local control schemes and reduce the performance gap with distributed optimization-based strategies, recent research efforts develop learning-based control frameworks for devising local Volt/Var strategies from data retrieved from DGs~\cite{WC-JL-BZ:22,YS-GQ-SL-AA-AW:22,GC-ZY-MKS-JC:22-cdc,ZY-GC-MKS-JC:22-tps}.
Although providing interesting insights on learning Volt/Var rules,
most of the existing works~\cite{SK-PA-GH:19,PST-PHG:22,XS-JQ-JZ:21}
do not assess the system stability and hence are not suitable for practical applications.
Also, in these methods, the local surrogates learned to predict OPF solutions only consider  voltage magnitudes, which limits performance as, from a local perspective, the same voltage magnitude might correspond to multiple OPF solutions~\cite{ZY-GC-MKS-JC:22-tps}.

\subsubsection*{Statement of Contributions}
This paper proposes\footnote{
Throughout the paper, $\real$ and $\complex$ denote the set of real and complex numbers, respectively. Upper and lower case boldface letters denote matrices and column vectors, respectively. 
Calligraphic symbols denote sets. 
Given a vector $\abf$ (resp., diagonal matrix $\mathbf A$), its $n$-th (diagonal) entry is denoted by $a_n$ $(A_n)$. $\Abf \succ(\succeq)~0$ denotes that matrix $\Abf$ is positive (semi-) definite, and $\Abf \prec(\preceq)~0$ denotes that matrix $\Abf$ is negative (semi-) definite. The symbol $(\cdot)^\top$ stands for transposition, and $\ones,\zeros, \Ibf$ denote vectors of all ones and zeros and identity matrix with appropriate dimensions, respectively. Operators $\Re(\cdot)$ and $\Im(\cdot)$ extract the real and imaginary parts of a complex-valued argument, and act element-wise. With a slight abuse of notation, we use $|\cdot|$ to denote the absolute value for real-valued arguments and the cardinality when the argument is a set. $\|\cdot\|$ represents the Euclidean norm. Given a matrix with real eigenvalues, $\lambda_{\max}(\cdot)$ and $\lambda_{\min}(\cdot)$ respectively represent its largest and smallest eigenvalue. An eigenvalue $\lambda$ and its associated eigenvector $\boldsymbol{\xi}$ form the eigenpair $(\lambda, \boldsymbol{\xi})$. 
The range of $\phi$ is the set of its possible output values. } a framework for devising local Volt/Var control schemes acting as local surrogates of OPF solvers.
Desirable network configurations are described by \emph{equilibrium functions}, which map local information to an approximation of the optimal generator reactive power output. An incremental control algorithm then steers the network toward the equilibria identified by the equilibrium functions.
Compared to our previous works~\cite{GC-ZY-MKS-JC:22-cdc, ZY-GC-MKS-JC:22-tps}, here the equilibrium functions depend not only on voltage magnitudes but also on reactive power injections, which has a significant impact in reducing the optimality gap of the proposed local strategies. 
Two different sets of conditions ensuring the asymptotic stability of equilibrium points for the incremental rule are provided. One requires coupling slope constraints on the functions of voltages and functions of reactive powers (referred to as CVP-SC), and another requires only slope constraints on the functions of reactive powers and decreasing functions of voltages (referred to as RP-SC).  Our results reveal the trade-offs between each set of conditions regarding the optimality gap; in particular the CVP-SC looks more suitable in DGs with relatively small-size generators, and the RP-SC looks more suitable in DGs with relatively large-size generators.

\section{DG Modeling and Problem Formulation}\label{sec:modeling}

A radial single-phase (or a balanced three-phase) DG having $N+1$ buses can be modeled by a tree graph $\mathcal G=(\mathcal N,\mathcal L)$ rooted at the substation. The nodes in $\mathcal N:=\{0,\ldots,N\}$ are associated with grid buses, and the edges in $\mathcal L$ with lines. Let $(m,n)$ be an edge in $\mathcal L$; $y_{(m,n)} \in \complex$ is its admittance.
Neglecting the shunt admittances, the bus admittance matrix $\tilde \Ybf \in \complex^{(N+1)\times(N+1)}$ is defined as   
\begin{align*}
(\tilde \Ybf)_{mn} = \begin{cases}
- y_{(m,n)} & \text{ if } (m,n) \in \Ec, m \neq n, \\
0 & \text{ if } (m,n) \notin \Ec, m \neq n, \\
\sum_{k \neq n} y_{(k,n)} & \text{ if }m = n.
\end{cases}
\end{align*}
Note that $\tilde \Ybf$ is symmetric and satisfies $\tilde \Ybf \ones = \zeros$. Also, by separating the components associated with the substation and the ones associated with the other nodes, $\tilde \Ybf$ is partitioned~as
\begin{align*}
\tilde \Ybf = \begin{bmatrix}
 y_{0}&\ybf_0^\top \\
 \ybf_0& \Ybf
 \end{bmatrix} ,
\end{align*}
with $y_{0} \in \complex$, $\ybf_0 \in \complex^{N}$, and $\Ybf \in \complex^{N \times N}$; $\Ybf$ is invertible when the network is connected~\cite{AMK-MP:17}, and we denote $\tilde \Rbf:=\Re(\Ybf^{-1})$ and $\tilde \Xbf:=\Im(\Ybf^{-1}) \in \real^{N \times N}$.

The voltage magnitude at bus $n\in \Nc$ is denoted as $v_n\in \real$.
The substation node, labeled as 0, behaves as an ideal voltage source imposing the nominal voltage of 1 p.u. 
The active and reactive power injections at bus $n\in \Nc$ are $p_n,q_n\in \real$, respectively. Powers take positive (negative) values, i.e., $p_n, q_n \geq 0$ ($p_n, q_n \leq 0$), when they are \emph{injected into} (\emph{absorbed from}) the grid.
The vectors $ \vbf,\pbf, \qbf \in \real^N$ collect the voltage magnitudes, active and reactive power injections for buses $1,2,\dots N$.

Voltages and powers are related by the nonlinear power flow equations but here we consider the linearization~\cite{ED-AS:18,GQ-NL:20,HZ-HJL:15}
 \begin{align}
 \vbf = \tilde \Rbf \pbf + \tilde \Xbf \qbf + \ones,
 \label{eq:v=Rp+Xq}
 \end{align}
because it will be useful to prove the stability properties of the proposed control algorithms. Nevertheless, the devised algorithms are tested on an exact AC power flow solver in Section~\ref{sec:sim}. Using~\eqref{eq:v=Rp+Xq}, power losses can be approximated as 
$\qbf^\top \tilde \Rbf \qbf + \pbf^\top \tilde \Rbf \pbf$.    

Assume a subset $\Cc \subseteq \Nc$ of buses host DERs, with $|\Cc| = C$.
Every DER corresponds to a smart agent provided with some computational and sensing capabilities, i.e., it can measure its voltage magnitude.
The remaining nodes constitute the set $\Lc=\Nc\setminus\Cc$ and are referred to as loads.
For convenience, we partition reactive powers and voltage magnitudes by grouping together the nodes belonging to the load and generation sets,
\begin{align*}
\qbf = \begin{bmatrix}
\qbf_\Cc^\top & \qbf_\Lc^\top
\end{bmatrix}^\top, \quad
\vbf = \begin{bmatrix}
\vbf_\Cc^\top & \vbf_\Lc^\top
\end{bmatrix}^\top.
\end{align*}
The matrices $\tilde \Rbf$ and $\tilde \Xbf$ can be partitioned as well, yielding
\begin{align*}
\tilde \Rbf = \begin{bmatrix}
\Rbf & \Rbf_\Lc \\
\Rbf_\Lc^\top & \Rbf_{\Lc\Lc}
\end{bmatrix}, \quad
\tilde \Xbf = \begin{bmatrix}
\Xbf & \Xbf_\Lc \\
\Xbf_\Lc^\top & \Xbf_{\Lc\Lc}
\end{bmatrix},
\end{align*}
with $\Rbf, \Xbf \succ 0$, cf.~\cite{HZ-HJL:15}.
Fixing the uncontrollable variables $\pbf, \qbf_\Lc$, and using~\eqref{eq:v=Rp+Xq}, voltage magnitudes become functions exclusively of $\qbf_\Cc$,
\begin{align}
 \label{eq:volt-approx}
 & \vbf(\qbf_\Cc) =
\begin{bmatrix}
 \Xbf \\
 \Xbf_\Lc^\top
 \end{bmatrix} \qbf_\Cc + \hat \vbf,
 \end{align}
where
 \begin{align*}
\hat \vbf = \begin{bmatrix}
\hat \vbf_\Cc\\
\hat \vbf_\Lc
\end{bmatrix} = 
\begin{bmatrix}
\Xbf_\Lc\\
\Xbf_{\Lc\Lc}
\end{bmatrix} \qbf_\Lc \!+\! \tilde\Rbf \pbf \!+\! \ones.
 \end{align*}

A distribution system operator seeks the generator reactive power injections to be optimal, i.e., to implement the solution of an \emph{optimal reactive power flow} (ORPF) problem. Here, we consider the following ORPF problem formulation, though other versions could be considered as well~\cite{ZY-GC-MKS-JC:22-tps}
\begin{subequations}\label{eq:ORPF}
	\begin{align}
	\qbf_\Cc^\star(\pbf,\qbf_\Lc):=&  \arg\min_{\qbf_\Cc}\ ~\qbf^\top \tilde \Rbf \qbf + \pbf^\top \tilde \Rbf \pbf \label{eq:ORPF:cost}\\
	\mathrm{s.t.}\  & ~ \eqref{eq:volt-approx}\notag\\
	&~\vbf_{\min} \leq \vbf(\qbf_\Cc) \leq \vbf_{\max} \label{eq:ORPF:c1}\\
    &~q_n \in \mathcal Q_n, n \in \Cc
	\end{align} 
\end{subequations}
where $\vbf_{\min}, \vbf_{\max} \in \real^N$ are desired bus voltage limits. The reactive power injections of bus $n$ must be within the feasible limits described by the set $\mathcal Q_n = \{q_n: q_{\min,n} \leq q_n \leq q_{\max,n} \}$ with $\qbf_{\min}, \qbf_{\max} \in \real^C$ being the minimum and maximum DERs' reactive power injection vectors. The cost encodes the goal of minimizing line losses.
Solving~\eqref{eq:ORPF} inevitably requires knowledge of the network-wide quantities $(\pbf,\qbf_\Lc)$,
making purely local control strategies in general unsuccessful.

To address this, we propose to obtain local surrogates of $\qbf_\Cc^\star$ by learning, for each agent $n \in \mathcal C$, a function $h_n$,
\begin{equation}
\label{eq:hn}
h_n: \Qc_n \times \real \rightarrow \Qc_n, \; (q_n,v_n) \mapsto h_n(q_n,v_n)
\end{equation}
that takes as input the current local voltage $v_n$ and reactive power $q_n$, and gives as output an approximation of the reactive power that the DER at node $n$ would inject to steer the network to the solution of~\eqref{eq:ORPF}. Note that each function $h_n$ only depends on local variables. 
Concurrently, we design local control rules whose equilibrium satisfies
\begin{equation}
q_n = h_n(q_n,v_n)
\label{eq:equilibria}
\end{equation}
and steer the network toward desired configuration described by $\{h_n\}_{n \in \Cc}$. For this reason, we refer to $\{h_n\}_{n \in \Cc}$ as \emph{equilibrium} functions. Throughout the paper, we consider equilibrium functions that meet the following assumption.

\begin{assumption}\longthmtitle{Separable and differentiable equilibrium functions}\label{ass:differentiability}
The equilibrium functions have the form
\begin{align}\label{eq:eq_fun} 
    h_n (q_n,v_n) =  \psi_n(q_n) + \phi_n(v_n), \qquad n\in \Cc 
\end{align}
where $\phi_n$ and $\psi_n$ are functions solely of the local voltage and reactive power, respectively.
Moreover,  $\phi_n$ and $\psi_n$ are continuously differentiable.
\end{assumption}
Let $L_{\phi_n}$ and $L_{\psi_n}$ represent the Lipschitz constants of $\phi_n$ and $\psi_n$, respectively. For convenience, denote $L_{\phib} := \max_{n \in \Cc} L_{\phi_n}, L_{\psib} := \max_{n \in \Cc} L_{\psi_n}$.
\begin{remark}\longthmtitle{Comparison to existing forms of equilibrium function}\label{rmk:eq-function}
In most of the existing local Volt/Var control schemes of the literature~\cite{XZ-MF-ZL-LC-SHL:21,GC-ZY-MKS-JC:22-cdc}, the equilibrium functions are a special case of Assumption~\ref{ass:differentiability}, 
as they only depend on the local voltages (i.e., $\psi_n = 0$).
Also, $h_n$ could further be generalized as a function that takes all local information $v_n,q_n,p_n$ as arguments. Since the inclusion of $p_n$ does not affect the following closed-loop stability analysis, here we consider $h_n$'s functions of $q_n$ and $v_n$. \oprocend
\end{remark}


\section{Learning Stable Local Volt/Var Controllers}

In this section, we propose a framework for learning local Volt/Var controllers with closed-loop stability guarantees.

\subsection{Local Volt/Var Controllers Design}\label{sec:control_rule}

Here, we propose a local control scheme to steer the system toward configurations meeting~\eqref{eq:equilibria} and provide conditions on the equilibrium functions that ensure asymptotic convergence.
The control algorithm is an \emph{incremental rule} of the form
\begin{equation}\label{eq:bus_react_upd} 
    q_n(t+1) = q_n(t) + \epsilon (h_n(q_n(t),v_n(t)) - q_n(t)), \; n \in \Cc ,
\end{equation} 
with $0 \leq \epsilon \leq 1$. Note that the set $\Qc_n$ is forward invariant under~\eqref{eq:bus_react_upd}, i.e.,  if $q_n(t) \in \Qc_n$, we also have $q_n(t+1)\in \Qc_n$. Let  $\Qc = \times_{n \in \Cc}\Qc_n$ and build the functions $\hbf$, $\phib$, and $\psib$ collecting all the functions $h_n$, $\phi_n$, and $\psi_n$, respectively. 
Then~\eqref{eq:hn} implies that  $\hbf: \Qc \times \real^C \rightarrow \Qc$, and that~\eqref{eq:eq_fun} yields
$$\hbf(\qbf_\Cc,\vbf_\Cc) = \psib(\qbf_\Cc) + \phib(\vbf_\Cc).$$
The power network evolution can then be described by
\begin{subequations} \label{eq:sys_dyn} 
\begin{align}
  &\qbf_\Cc(t + 1) = (1 - \epsilon)\qbf_\Cc(t) + \epsilon \hbf (\qbf_\Cc(t),\vbf_\Cc(t)) , \label{eq:sys_dyn-q} \\
  &\vbf_\Cc(t + 1) = \Xbf \qbf_\Cc(t+1) + \hat \vbf_\Cc . \label{eq:sys_dyn-v}
\end{align} 
\end{subequations}
Plugging~\eqref{eq:sys_dyn-v} into~\eqref{eq:sys_dyn-q}, we obtain the operator $\gbf:\Qc\rightarrow\Qc$ in terms of the reactive power
\begin{align*}
\gbf(\qbf_\Cc):= \qbf_\Cc \!+\! \epsilon(\psib(\qbf_\Cc) \!+\! \phib(\Xbf \qbf_\Cc \!+\! \hat \vbf_\Cc) - \qbf_\Cc).
\end{align*}
Note that the iteration $\qbf_\Cc (t+1) = \gbf (\qbf_\Cc(t))$ precisely corresponds to the control rule~\eqref{eq:bus_react_upd}. Since $\Qc$ is convex, compact and $\gbf$ is continuous, it follows from the Brouwer's fixed-point theorem~\cite{AG-JD:03} that $\gbf$ has a fixed point. In other words, system~\eqref{eq:sys_dyn} admits an equilibrium. Note that any equilibrium 
$(\qbf_\Cc^\sharp,\vbf_\Cc^\sharp)$  of~\eqref{eq:sys_dyn} satisfies by definition
\begin{subequations}\label{eq:fixed-point}
\begin{align}
    \qbf_\Cc^\sharp &= \hbf(\qbf_\Cc^\sharp,\vbf_\Cc^\sharp), \label{eq:fixed-point-q} \\ 
    \vbf_\Cc^\sharp &= \Xbf \qbf_\Cc^\sharp + \hat \vbf_\Cc. \label{eq:fixed-point-v}
\end{align}
\end{subequations}
That is, any equilibrium of~\eqref{eq:sys_dyn} satisfies~\eqref{eq:equilibria}.
The next result provides conditions on the equilibrium functions that guarantee the uniqueness and asymptotic stability of the equilibrium.

\begin{theorem}\longthmtitle{Uniqueness and asymptotic stability of the equilibrium}\label{thm:global-stability}
Under Assumption~\ref{ass:differentiability},
the system~\eqref{eq:sys_dyn} has an unique equilibrium point which is asymptotically stable if
\begin{equation}
L_{\psib} + L_{\phib} \|\Xbf\| < 1 .
\tag{C1}
\label{eq:Lipschtz_cond}
\end{equation}
\end{theorem}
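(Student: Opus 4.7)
The plan is to show the map $\gbf$ defined just before the statement is a contraction on $\Qc$, and then invoke the Banach fixed-point theorem. Since equilibria of~\eqref{eq:sys_dyn} are exactly fixed points of $\gbf$ (as already observed in the text via~\eqref{eq:fixed-point}), a strict contraction gives uniqueness of the equilibrium together with global asymptotic convergence of the iterates $\qbf_\Cc(t+1)=\gbf(\qbf_\Cc(t))$, which is the desired asymptotic stability.

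First, I would pick arbitrary $\qbf_1,\qbf_2 \in \Qc$ and expand
\begin{align*}
\gbf(\qbf_1) - \gbf(\qbf_2)
= &\,(1-\epsilon)(\qbf_1-\qbf_2) + \epsilon\bigl(\psib(\qbf_1)-\psib(\qbf_2)\bigr) \\
  &+ \epsilon\bigl(\phib(\Xbf\qbf_1+\hat\vbf_\Cc)-\phib(\Xbf\qbf_2+\hat\vbf_\Cc)\bigr),
\end{align*}
and apply the triangle inequality. Here I use the separability in Assumption~\ref{ass:differentiability} crucially: because each $\psi_n$ depends only on $q_n$, the map $\psib$ has a diagonal Jacobian with entries bounded by $L_{\psi_n} \le L_{\psib}$, so $\|\psib(\qbf_1)-\psib(\qbf_2)\| \le L_{\psib}\|\qbf_1-\qbf_2\|$ componentwise and therefore in Euclidean norm. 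The same argument gives $\|\phib(\ubf_1)-\phib(\ubf_2)\|\le L_{\phib}\|\ubf_1-\ubf_2\|$, which, applied with $\ubf_i = \Xbf\qbf_i+\hat\vbf_\Cc$, yields the bound $L_{\phib}\|\Xbf\|\,\|\qbf_1-\qbf_2\|$.

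Combining these three pieces gives
\begin{equation*}
\|\gbf(\qbf_1)-\gbf(\qbf_2)\| \le \bigl[(1-\epsilon) + \epsilon(L_{\psib}+L_{\phib}\|\Xbf\|)\bigr]\|\qbf_1-\qbf_2\|,
\end{equation*}
and the bracketed constant is strictly less than $1$ precisely when $\epsilon>0$ and condition~\eqref{eq:Lipschtz_cond} holds. Since $\gbf$ maps the complete metric space $\Qc$ into itself (already noted: $\Qc_n$ is forward invariant, hence so is $\Qc$), the Banach fixed-point theorem delivers a unique fixed point $\qbf_\Cc^\sharp\in\Qc$ with $\qbf_\Cc(t)\to\qbf_\Cc^\sharp$ geometrically from any initialization; by~\eqref{eq:sys_dyn-v}, $\vbf_\Cc(t)\to\vbf_\Cc^\sharp = \Xbf\qbf_\Cc^\sharp + \hat\vbf_\Cc$ as well, giving asymptotic stability of $(\qbf_\Cc^\sharp,\vbf_\Cc^\sharp)$.

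The only mildly subtle step is the vector Lipschitz bound on $\psib$: without separability one would only be able to bound $\|\psib(\qbf_1)-\psib(\qbf_2)\|$ by $\sqrt{C}\,L_{\psib}\|\qbf_1-\qbf_2\|$, which would tighten the condition by a factor of $\sqrt{C}$. Assumption~\ref{ass:differentiability} is exactly what removes that factor. Everything else is a routine application of the contraction mapping theorem, so I do not anticipate further obstacles.
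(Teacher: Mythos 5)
Your proposal is correct and follows essentially the same route as the paper: both establish that $\gbf$ is a contraction on $\Qc$ with modulus $(1-\epsilon)+\epsilon(L_{\psib}+L_{\phib}\|\Xbf\|)$ and conclude via the Banach fixed-point theorem (the paper packages the Lipschitz bounds as diagonal difference-quotient matrices $\mathbf\Psi,\mathbf\Phi$ rather than applying the triangle inequality directly, but this is cosmetic). Your explicit observation that separability is what yields the componentwise Lipschitz bound without a $\sqrt{C}$ factor, and that $\epsilon>0$ is needed for strict contraction, are both accurate refinements that the paper leaves implicit.
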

\begin{proof}
 For all $\qbf_\Cc,\qbf_\Cc^\prime \in \Qc$, note that 
 \begin{align*}
     \gbf(\qbf_\Cc)  - \gbf(\qbf_\Cc^\prime)
  &=  (1 - \epsilon)(\qbf_\Cc- \qbf_\Cc^\prime) + \epsilon(\psib(\qbf_\Cc) - \psib(\qbf_\Cc^\prime)) \\
    & \quad + \epsilon(\phib(\Xbf \qbf_\Cc + \hat \vbf_\Cc) - \phib(\Xbf \qbf_\Cc^\prime + \hat \vbf_\Cc))
    \\
    &  = 
((1 - \epsilon)\Ibf + \epsilon \mathbf \Psi + \epsilon \mathbf \Phi \Xbf )( \qbf_\Cc-\qbf_\Cc^\prime) ,
\end{align*}
where $\bf \Psi$ and $\bf \Phi$ are diagonal matrices with diagonal elements
\begin{align*}
&\Psi_n := 
\begin{cases}
\frac{\psi_n(q_n) - \psi_n(q_n')}{q_n - q_n'} & q_n \neq q_n', \\
\hfil 0 & q_n = q_n',
\end{cases}\\
&\Phi_n := 
\begin{cases}
\frac{\phi_n(v_n) - \phi_n(v_n')}{v_n - v_n'} & v_n \neq v_n', \\
\hfil 0 & v_n = v_n' .
\end{cases}
\end{align*}
Note that 
\begin{align*}
    \|\gbf(\qbf_\Cc)  - \gbf(\qbf_\Cc^\prime)\|  & \leq \| (1 - \epsilon) \Ibf + \epsilon \mathbf \Psi + \epsilon \mathbf \Phi \Xbf \|  \|\qbf_\Cc-\qbf_\Cc^\prime\| 
   \\
    &\leq (1 - \epsilon + \epsilon(L_{\psib} + L_{\phib} \|\Xbf \|))  \|\qbf_\Cc-\qbf_\Cc^\prime\|.
\end{align*}
From~\eqref{eq:Lipschtz_cond}, it follows that $0 < 1 - \epsilon + \epsilon(L_{\psib} + L_{\phib} \|\Xbf \|) < 1$, i.e., $\gbf$ is a contraction. Invoking the Banach's fixed-point theorem~\cite{AG-JD:03}, we conclude that system~\eqref{eq:sys_dyn} has an unique equilibrium which is asymptotic stable. 
\end{proof}

Condition \eqref{eq:Lipschtz_cond} allows the equilibrium functions to have arbitrary shapes, unlike the classic monotone piecewise linear form employed in the literature, see e.g.,~\cite{IEEE1547, XZ-MF-ZL-LC-SHL:21}. However, condition~\eqref{eq:Lipschtz_cond} constrains both the slope of the functions $\{\phi_n,\psi_n\}_{n \in \Cc}$, i.e., $L_{\phib} <  1/\|\Xbf\|$ and $L_{\psib} < 1$. The drawback is that the range of $h_n$ might be a strict subset of $\Qc_n$, especially when $q_{\max,n}$ is large, meaning that the corresponding DER $n$ can not fully exploit its generation capabilities.

This limitation can be addressed by  relaxing the slope constraints on the functions $\{\phi_n\}_{n \in \Cc}$ at the cost of requiring them to be \emph{decreasing}, as in e.g.,~\cite{GC-ZY-MKS-JC:22-cdc,ZY-GC-MKS-JC:22-tps,XZ-MF-ZL-LC-SHL:21} (these works, however, consider equilibrium functions of the form $h_n = \phi_n$, i.e., only depending on local voltage magnitudes). The next result provides another set of conditions ensuring the uniqueness and asymptotic stability of the equilibrium of~\eqref{eq:sys_dyn} for equilibrium functions of the form~\eqref{eq:eq_fun}. 
To begin with, we first give a result bounds the eigenvalues of a normal matrix with perturbation that will be used next.

\begin{lemma}\longthmtitle{Bauer and Fike Theorem~\cite[Corollary 6.3.4]{RAH-CRJ:12}}\label{lem:bauer_and_fike}
Let $\Abf,\Ebf \in \real^{n \times n}$ and suppose that $\Abf$ is normal. If $\hat{\lambda}$ is an eigenvalue
of $\Abf + \Ebf$, then there is an eigenvalue $\lambda$ of $\Abf$ such that $|\hat{\lambda} - \lambda| \leq \|\Ebf\|$.
\end{lemma}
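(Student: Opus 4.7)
The plan is to exploit the normality of $\Abf$ via its spectral decomposition, combined with a standard resolvent argument. First, I would dispose of the trivial case: if $\hat{\lambda}$ already coincides with some eigenvalue $\lambda$ of $\Abf$, the bound $|\hat{\lambda} - \lambda| = 0 \le \|\Ebf\|$ holds automatically. Thus one may assume henceforth that $\hat{\lambda}$ is not an eigenvalue of $\Abf$, so that $\Abf - \hat{\lambda}\Ibf$ is invertible.

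Next, since $\hat{\lambda}$ is an eigenvalue of $\Abf + \Ebf$, there exists a nonzero eigenvector $\xbf$ with $(\Abf + \Ebf)\xbf = \hat{\lambda}\xbf$. Rearranging gives $(\Abf - \hat{\lambda}\Ibf)\xbf = -\Ebf\xbf$, and inverting yields $\xbf = -(\Abf - \hat{\lambda}\Ibf)^{-1}\Ebf\xbf$. Taking spectral norms on both sides and dividing through by $\|\xbf\| > 0$ produces the key inequality
\begin{align*}
1 \le \|(\Abf - \hat{\lambda}\Ibf)^{-1}\|\, \|\Ebf\|.
\end{align*}

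The final step evaluates the resolvent norm using normality. Writing the spectral decomposition $\Abf = \Ubf\,\boldsymbol{\Lambda}\,\Ubf^{*}$ with $\Ubf$ unitary and $\boldsymbol{\Lambda} = \diag(\lambda_1,\ldots,\lambda_n)$ collecting the eigenvalues of $\Abf$, and exploiting the unitary invariance of the spectral norm, one finds
\begin{align*}
\|(\Abf - \hat{\lambda}\Ibf)^{-1}\| = \|(\boldsymbol{\Lambda} - \hat{\lambda}\Ibf)^{-1}\| = \frac{1}{\min_{i} |\lambda_i - \hat{\lambda}|}.
\end{align*}
Combining the two displays gives $\min_i |\lambda_i - \hat{\lambda}| \le \|\Ebf\|$, and taking $\lambda$ to be a minimizing eigenvalue of $\Abf$ completes the argument.

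The subtle point, rather than a true obstacle, is recognizing that normality is used precisely to equate the resolvent norm with the reciprocal of the distance from $\hat{\lambda}$ to the spectrum; without normality one only obtains the classical Bauer--Fike bound, which carries an extra factor equal to the condition number of a diagonalizing transformation. All other manipulations are elementary.
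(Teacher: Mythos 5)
Your proof is correct: the resolvent argument $1 \le \|(\Abf - \hat{\lambda}\Ibf)^{-1}\|\,\|\Ebf\|$ combined with the identity $\|(\Abf - \hat{\lambda}\Ibf)^{-1}\| = 1/\min_i|\lambda_i - \hat{\lambda}|$ for normal $\Abf$ (valid since a real normal matrix is still unitarily diagonalizable over $\complex$, which your use of $\Ubf^{*}$ handles) is exactly the standard proof of Bauer--Fike in the normal case. The paper does not prove this lemma at all --- it simply imports it from Horn and Johnson --- and your argument is essentially the one given in that cited source, where the general diagonalizable case carries the extra condition-number factor you mention.
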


Now we are ready to give conditions that ensure asymptotic stability of the closed-loop system.

\begin{theorem}\longthmtitle{Uniqueness and asymptotic stability of the equilibrium}\label{thm:local-stability}
Under Assumptions~\ref{ass:differentiability}, the system~\eqref{eq:sys_dyn} has an unique equilibrium point if
\begin{equation}
\tag{C2}
\label{eq:Lpsi}
\{\phi_n\}_{n \in \Cc}~\text{are decreasing},\quad L_{\psib} < 1.    
\end{equation}
Further, the equilibrium point is asymptotically stable if
\begin{equation}
\label{eq:eps_cond}
\epsilon < \frac{2}{L_{\psib} + L_{\phib} \|\Xbf\| + 1}.    
\end{equation}
\end{theorem}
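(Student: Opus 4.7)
The plan is to carry out the same mean-value decomposition as in the proof of Theorem~\ref{thm:global-stability}, writing for any $\qbf_\Cc,\qbf_\Cc' \in \Qc$
\begin{equation*}
\gbf(\qbf_\Cc) - \gbf(\qbf_\Cc') = \Mbf(\qbf_\Cc - \qbf_\Cc'), \quad \Mbf := (1-\epsilon)\Ibf + \epsilon\Psib + \epsilon\Phib\Xbf,
\end{equation*}
with diagonal finite-difference matrices $\Psib,\Phib$. Under~\eqref{eq:Lpsi}, every entry of $\Phib$ is non-positive (because each $\phi_n$ is decreasing) with $|\Phi_n| \leq L_{\phib}$, while $|\Psi_n| \leq L_{\psib} < 1$. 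Unlike in Theorem~\ref{thm:global-stability}, bounding $\|\Mbf\|_2$ directly would only recover condition~\eqref{eq:Lipschtz_cond}; I would instead localize the \emph{spectrum} of $\Mbf$ in order to exploit the sign of $\Phib$.

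The key structural fact is that $\Phib\Xbf$ is similar, via conjugation by $\Xbf^{1/2}$, to the symmetric negative semidefinite matrix $\Xbf^{1/2}\Phib\Xbf^{1/2}$, whose eigenvalues are therefore real and confined to $[-L_{\phib}\|\Xbf\|,0]$. I would pass to $\tilde\Mbf := \Xbf^{1/2}\Mbf\Xbf^{-1/2}$, which has the same spectrum as $\Mbf$, and split it as $\tilde\Mbf = \tilde\Abf + \tilde\Ebf$ with
\begin{equation*}
\tilde\Abf := (1-\epsilon)\Ibf + \epsilon\Xbf^{1/2}\Phib\Xbf^{1/2}, \quad \tilde\Ebf := \epsilon\Xbf^{1/2}\Psib\Xbf^{-1/2}.
\end{equation*}
Then $\tilde\Abf$ is symmetric, hence normal, with eigenvalues in $[(1-\epsilon)-\epsilon L_{\phib}\|\Xbf\|,\, 1-\epsilon]$. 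Lemma~\ref{lem:bauer_and_fike} then places every eigenvalue $\hat\lambda$ of $\Mbf$ inside a closed disk of radius $\|\tilde\Ebf\|$ centered at a real point of that interval.

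Granting the bound $\|\tilde\Ebf\|_2 \leq \epsilon L_{\psib}$, uniqueness follows: the disk closest to $1$ is centered at $1-\epsilon$, so its distance to $1$ equals $\epsilon$, which exceeds the radius $\epsilon L_{\psib}$ because $L_{\psib} < 1$. Hence $1$ is not an eigenvalue of $\Mbf$, and the mean-value identity $(\Ibf - \Mbf)(\qbf_\Cc^\sharp - \qbf_\Cc^{\sharp\prime}) = 0$ applied to any two equilibria forces them to coincide. For asymptotic stability of the unique equilibrium, I would argue $\rho(\Mbf) < 1$ at the Jacobian: the binding case is the leftmost disk, centered at $1-\epsilon-\epsilon L_{\phib}\|\Xbf\|$ with furthest point from $0$ of modulus $\epsilon(1 + L_{\phib}\|\Xbf\|) - 1 + \epsilon L_{\psib}$, and requiring this quantity to be strictly less than $1$ yields exactly $\epsilon(1 + L_{\psib} + L_{\phib}\|\Xbf\|) < 2$, which is~\eqref{eq:eps_cond}.

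The hard part will be justifying the spectral-norm bound $\|\tilde\Ebf\|_2 \leq \epsilon L_{\psib}$. Since $\tilde\Ebf$ is similar to $\epsilon\Psib$, its eigenvalues are bounded by $\epsilon L_{\psib}$ in modulus, but its $2$-norm can a priori inflate by $\sqrt{\kappa(\Xbf)}$ because $\Psib$ does not commute with $\Xbf$. I would address this either by re-running the contraction argument in the weighted norm $\|\cdot\|_{\Xbf}$, where the symmetric part $\Xbf^{1/2}\Phib\Xbf^{1/2}$ sits cleanly, or by a more refined application of Bauer–Fike that uses the favorable eigenvalue structure of the perturbation rather than its full spectral norm.
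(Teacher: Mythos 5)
Your outline is close to the paper's in spirit, but it hinges on the spectral-norm bound $\|\tilde\Ebf\|_2=\epsilon\|\Xbf^{1/2}\Psib\Xbf^{-1/2}\|_2\le\epsilon L_{\psib}$, which you correctly flag as unproven---and which is false in general: since the diagonal matrix $\Psib$ does not commute with $\Xbf^{1/2}$, this norm can inflate by a factor up to $\sqrt{\kappa(\Xbf)}$, and both your uniqueness step (the disk nearest $1$) and your stability step (the leftmost disk) collapse without it. Neither proposed repair closes the gap: the $\Xbf$-weighted norm gives $\|\Mbf\|_{\Xbf}=\|\Xbf^{1/2}\Mbf\Xbf^{-1/2}\|_2$ and runs into exactly the same term, and ``a more refined Bauer--Fike'' is not specified. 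The missing idea is to conjugate by a \emph{diagonal} matrix rather than by $\Xbf^{1/2}$: since $\Jbf_{\phib}\prec 0$ is diagonal, $(-\Jbf_{\phib})^{1/2}$ is well defined and commutes with the diagonal $\Jbf_{\psib}$, so conjugating $\Jbf_{\gbf}$ by it symmetrizes the cross term into $-(-\Jbf_{\phib})^{1/2}\Xbf(-\Jbf_{\phib})^{1/2}\prec 0$ while leaving $(1-\epsilon)\Ibf+\epsilon\Jbf_{\psib}$ untouched. This is how the paper proceeds: it first concludes that the eigenvalues of $\Jbf_{\gbf}$ are real and that $\lambda_{\max}(\Jbf_{\gbf})<1$ (from $\Jbf_{\psib}-\Ibf\prec 0$), and then applies Lemma~\ref{lem:bauer_and_fike} directly to $\Jbf_{\gbf}$ with the normal (diagonal) matrix $\Abf=(1-\epsilon)\Ibf+\epsilon\Jbf_{\psib}$ and perturbation $\Ebf=\epsilon\Jbf_{\phib}\Xbf$, whose norm is bounded by $\epsilon L_{\phib}\|\Xbf\|$ with no conditioning loss, yielding $\lambda_{\min}(\Jbf_{\gbf})>-1$ under~\eqref{eq:eps_cond}.

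For uniqueness the paper needs no spectral estimate at all: writing the difference of two putative equilibria as $(\Ibf-\Psib)(\qbf_\Cc-\qbf_\Cc')=\Phib(\vbf_\Cc-\vbf_\Cc')$, condition~\eqref{eq:Lpsi} gives $(\Ibf-\Psib)^{-1}\Phib\prec 0$, while~\eqref{eq:fixed-point-v} forces $\qbf_\Cc-\qbf_\Cc'=\Xbf^{-1}(\vbf_\Cc-\vbf_\Cc')$ with $\Xbf^{-1}\succ 0$; the two sign-definite representations of the same vector are contradictory unless the equilibria coincide. This argument is both simpler and independent of any bound on $\|\tilde\Ebf\|$, so I recommend adopting it for the first claim and reserving the Bauer--Fike machinery, applied as above, for the stability claim only.
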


\begin{proof}
The uniqueness of the equilibrium can be proved by contradiction. Assume that $\qbf_C$ and $\qbf_C'$ are equilibrium points, with $\qbf_C \neq \qbf_C'$. Then,
$$\qbf_\Cc - \qbf_\Cc' = \phib(\vbf_\Cc) + \psib(\qbf_\Cc)  - \phib(\vbf_\Cc') - \psib(\qbf_\Cc').$$
It follows that
\begin{align*}
  (\Ibf - \mathbf \Psi) (\qbf_\Cc - \qbf_\Cc') = \phib(\vbf_\Cc)  - \phib(\vbf_\Cc') = \mathbf \Phi (\vbf_\Cc - \vbf_\Cc'),
\end{align*}
and consequently
\begin{align}\label{eq:unique-eq-proof-A}
     \qbf_\Cc - \qbf_\Cc' &= (\Ibf - \mathbf \Psi)^{-1}  \mathbf \Phi(\vbf_\Cc - \vbf_\Cc').
\end{align}
Note that condition~\eqref{eq:Lpsi} and the fact that the $\{\phi_n\}_{n \in \Cc}$ are decreasing yields $\Ibf - \mathbf \Psi \succ 0$ and $\mathbf \Phi \prec 0$ and hence $(\Ibf - \mathbf \Psi)^{-1}  \mathbf \Phi \prec 0$. 
On the other hand,~\eqref{eq:fixed-point-v} yields $\qbf_\Cc - \qbf_\Cc' = \Xbf^{-1} (\vbf_\Cc - \vbf_\Cc')$,
which contradicts~\eqref{eq:unique-eq-proof-A} as $\Xbf^{-1} \succ 0$. 

Next, we establish asymptotic stability. The Jacobian of $\gbf$~is:
$$\Jbf_{\gbf} := (1-\epsilon)\Ibf + \epsilon \Jbf_{\psib} + \epsilon \Jbf_{\phib} \Xbf,$$
where the diagonal matrices $\Jbf_{\phib}$ and $\Jbf_{\psib}$ are the Jacobian matrices of $\phib$ and $\psib$, respectively. Since $\{\phi_n\}_{n \in \Cc}$ are decreasing, $\Jbf_{\phib} \prec 0$. Notice that $\Jbf_{\gbf}$ is similar to the matrix 
\begin{align*}
\hat \Jbf_{\gbf} &= (1-\epsilon)\Ibf + \epsilon \Jbf_{\psib} - \epsilon (-\Jbf_{\phib})^{\frac 1 2} \Xbf (-\Jbf_{\phib})^{\frac 1 2} \\
& = \Ibf + \epsilon (\Jbf_{\psib} - \Ibf) - \epsilon (-\Jbf_{\phib})^{\frac 1 2} \Xbf (-\Jbf_{\phib})^{\frac 1 2},
\end{align*}
which is symmetric and therefore its eigenvalues are all real. Hence, $\Jbf_{\gbf}$ and $\hat \Jbf_{\gbf}$ share the same real eigenvalues. We first show $\lambda_{\max}(\Jbf_{\gbf}) < 1$. Since $L_{\psib} < 1$, we have $\Jbf_{\psib} - \Ibf \prec 0$; combined with the fact that $(-\Jbf_{\phib})^{\frac 1 2} \Xbf (-\Jbf_{\phib})^{\frac 1 2} \succ 0$, we conclude that $\lambda_{\max}(\hat \Jbf_{\gbf}) < 1$, and therefore $\lambda_{\max}(\Jbf_{\gbf}) < 1$. 

Next, we show that $\lambda_{\min}(\Jbf_\gbf) > -1$. Since $(1-\epsilon)\Ibf + \epsilon \Jbf_{\psib}$ is symmetric, using Lemma~\ref{lem:bauer_and_fike},
\begin{align*}
   \lambda_{\min}(\Jbf_{\gbf}) & \geq 1 - \epsilon + \lambda_{\min}(\epsilon \Jbf_{\psib}) - \|\epsilon \Jbf_{\phib} \Xbf\| \\
   & \geq 1 - \epsilon - \epsilon \lambda_{\max}(-\Jbf_{\psib}) - \epsilon \|\Jbf_{\phib} \Xbf\| \\
   & > 1 - \frac{2(\|\Jbf_{\psib}\| + \|\Jbf_{\phib} \Xbf\| + 1)}{L_{\psib} + L_{\phib} \|\Xbf\| + 1} \geq -1.
\end{align*}
This ensures the asymptotic stability of the equilibrium~\cite[Theorem 3.3]{NB-RC-LS:18}, completing the proof.
\end{proof}

\begin{remark}\longthmtitle{Trade-offs between coupled voltage-power slope constraint and  reactive power slope constraint}\label{rmk:trade-off}
We envision the \emph{coupled voltage-power slope constraint} (CVP-SC), cf.~\eqref{eq:Lipschtz_cond}, to provide meaningful designs for the case of a DG with relatively small size generators, for which it does not quite limit the generation usage, and the more flexible shape of equilibrium functions could enhance optimality. Instead, we believe the \emph{reactive power slope constraint} (RP-SC), cf.~\eqref{eq:Lpsi}, is more suitable for the case of a DG with relatively big size generators, since the lack of slope limitations on the functions $\{\phi_n\}_{n \in \Cc}$ could help the generators to make full use of their reactive power compensation capabilities and thus leads to better performance. We illustrate such trade-offs in the simulations through two different DG cases.
Finally, we note that CVP-SC ensures \emph{global} asymptotic stability, whereas RP-SC ensures \emph{local} asymptotic stability, as the latter relies on the linearization of the operator $\gbf$ at equilibrium points. Moreover, CVP-SC allows arbitrary $\epsilon \in [0,1]$, while RP-SC might be more restrictive on the selection of $\epsilon$, cf.~\eqref{eq:eps_cond}.
\oprocend
\end{remark}

\subsection{Learning The Equilibrium Functions}
The learning process consists of the following steps.
First, 
we build a set $\{(\pbf^k,\qbf_\Lc^k,\qbf_\Cc^k)\}_{k=1}^K$ of $K$ load-generation scenarios. 
One can  obtain these scenarios via random sampling from assumed probability distributions, historical data, or from forecasted conditions for a look-ahead period. 
Second, we solve the power flow equation~\eqref{eq:v=Rp+Xq} and ORPF problem~\eqref{eq:ORPF} for these $K$ scenarios to obtain
$\vbf_\Cc^k(\pbf^k,\qbf_\Lc^k,\qbf_\Cc^k)$
and $\qbf_\Cc^{\star,k}(\pbf^k,\qbf_\Lc^k)$, respectively.
Finally, we build a dataset for each $n \in \Cc$ of the form $\Dc_n=\{(v_{n}^k,q_{n}^k,q^{\star,k}_{n})\}_{k=1}^K$,
and each equilibrium function $h_n$ is then learned by solving 
\begin{align}\label{eq:learning_problem}
    \min_{h_n}\ & ~\sum_{k=1}^K |q_{n}^{\star,k} - h_n(v_n^k,q_n^k) |^2  \\
    \mathrm{s.t.}\  & ~h_n~\text{designed under CVP-SC or~RP-SC}. \notag
\end{align}
Note that solving~\eqref{eq:learning_problem} requires parameterizing $\{\phi_n\}_{n \in \Cc}$ and $\{\psi_n\}_{n \in \Cc}$ either to be decreasing or slope-limited. We adopt the single-hidden-layer neural network approximation method in~\cite[Section IV]{ZY-GC-MKS-JC:22-tps}, which provides convenient conditions on the weights of neural networks to ensure monotonicity and slope limitations.
Then,~\eqref{eq:learning_problem} can be solved using suitable renditions of (stochastic) gradient descent prevalent for neural network training, e.g., the Adam algorithm~\cite{DPK-JB:15}.


\section{Case Study}
\label{sec:sim}

We validate our approach on a modified version of the IEEE 37-bus feeder taken from~\cite{GC-ZY-MKS-JC:22-cdc}, reported in Fig.~\ref{fig:ieee37}.
We benchmark the resulting control rules against our previous work~\cite{ZY-GC-MKS-JC:22-tps}, where the equilibrium functions solely depend on local voltages, which showed 
significant  enhancements in the optimality gap with respect to (optimized) linear droop control. 

\begin{figure}[t]
\centering	
\includegraphics[width=0.5\columnwidth]{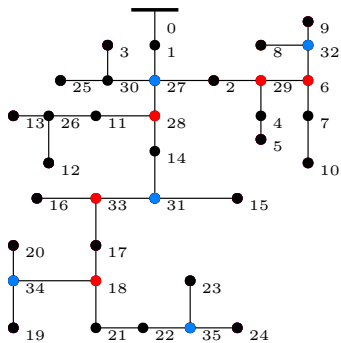}
\caption{The modified IEEE 37-bus feeder. Blue nodes and red nodes represent generators in $\Cc_1$ and $\Cc_2$, respectively.}
\vspace*{-.5em}
\label{fig:ieee37}
\end{figure}

\subsubsection*{Simulation Setup} 
We consider two cases. In \texttt{Case-1}, there are 5 generators, placed at buses $\Cc_1 =\{27, 31, 32, 34, 35\}$,  with generation capability $\qbf_{\max} = 0.4 \times \ones$ MVAR; in \texttt{Case-2}, there are 10 generators, placed on buses $\Cc_1 \cup \Cc_2$, where $\Cc_2 = \{6,18,28,29,33\}$, with generation capability $\qbf_{\max} = 0.2 \times \ones$ MVAR. 
In both cases, $\qbf_{\min} = -\qbf_{\max}$, $\vbf_{\max} = 1.05 \times \ones$ p.u., and $\vbf_{\min} = 0.95 \times \ones$ p.u.
Notice that in \texttt{Case-2}, we have a bigger number of smaller generators than in \texttt{Case-1}.
For our experiments, we use the same minute-based load and solar generation data of~\cite{ZY-GC-MKS-JC:22-tps}.
Moreover, we randomly generate five reactive power injection $\qbf_\Cc$ from $[\qbf_{\min},\qbf_{\max}]$ for each minute-based data, resulting in a total of $K = 1440 \times 5 = 7200$ load-generation profiles.
We use the CVX toolbox~\cite{MG-SB:14-cvx} to solve the power flow equation~\eqref{eq:v=Rp+Xq} as well as the ORPF problem~\eqref{eq:ORPF} for all load-generation profiles. We train the neural networks to solve~\eqref{eq:learning_problem} using TensorFlow 2.7.0 in Google Colab with a single TPU with 32 GB memory. The training hyperparameters are the same with~\cite{ZY-GC-MKS-JC:22-tps}. 

\subsubsection*{Learning Performance}
The equilibrium functions are computed by solving~\eqref{eq:learning_problem}.
Following our discussions in Remark~\ref{rmk:trade-off}, we use RP-SC in \texttt{Case-1} since it is more suitable for the case with relatively big size generators, and use CVP-SC in \texttt{Case-2} due to its advantages for the case with relatively small-size generators. In all cases, we use~\cite{ZY-GC-MKS-JC:22-tps} as the baseline.
Table~\ref{tab:loss} illustrates the learning performance for the two test cases. In both, the algorithms developed under CVP-SC in Theorem~\ref{thm:global-stability} and RP-SC in Theorem~\ref{thm:local-stability}, when
compared with the one of~\cite{ZY-GC-MKS-JC:22-tps}, achieve lower training loss, computed as 
$\frac{1}{KC} \sum_{n \in \Cc} \sum_{k=1}^K |q_{n}^{\star,k} - h_n(v_n^k,q_n^k) |^2$.
This shows that the inclusion of reactive power as argument of the equilibrium function helps increase the prediction accuracy. Figs.~\ref{fig:func-big-gen} and~\ref{fig:func-small-gen} plot the learned functions $\phi_{35}$ and $\psi_{35}$ for \texttt{Case-1} and \texttt{Case-2}, respectively. In \texttt{Case-1}, although CVP-SC allows the function $\phi_n$ to have arbitrary shape, the learned function $\phi_{35}$ in Fig.~\ref{fig:func-big-gen} is decreasing. Thus, the more restrictive slope limitations of CVP-SC make its performance worse than RP-SC in approximating the ORPF solutions, which explains CVP-SC yielding greater training loss than RP-SC for \texttt{Case-1} in Table~\ref{tab:loss}. 
Instead, in \texttt{Case-2}, although CVP-SC has more restrictive slope limitations on the functions $\{\phi_n\}_{n \in \Cc}$, it does not affect much the performance since the generation capability is relatively small. Instead, the monotonicity requirement for  the functions $\{\phi_n\}_{n\in\Cc}$ in RP-SC degrades the prediction accuracy as one can see that the learned function $\phi_{35}$ in Fig.~\ref{fig:func-small-gen} for CVP-SC is not always decreasing. This is the reason that CVP-SC works better for \texttt{Case-2}, as Table~\ref{tab:loss} suggests. These observations are consistent with our discussion in Remark~\ref{rmk:trade-off}.

\begin{table}[tb]
\centering
\caption{Average training loss and optimality gap distance between actual reactive power setpoints and ORPF solutions\tablefootnote{The method in our previous work~\cite{ZY-GC-MKS-JC:22-tps} is a special case of RP-SC with $\psi_n = 0$ for all $n \in \Cc$.}}
\vspace{-1ex}
\begin{tabular}{ c | c c | c c}
\toprule
& \multicolumn{2}{c|}{Training loss} & \multicolumn{2}{c}{Average distance} \\
\midrule
 & Baseline & Improvement & Baseline  & Improvement \\
\midrule
\texttt{Case-1}  & $6.3 \times 10^{-3}$ & $39.7 \%$ & 0.1969 & $13.09 \%$\\
\texttt{Case-2}  & $7.9 \times 10^{-4}$ & $14.0 \%$ & 0.1235 & $39.41 \%$\\
\bottomrule
\end{tabular}
\label{tab:loss}
\end{table}

\begin{figure}[ht]
    \centering
    \subfigure[CVP-SC]{
    \includegraphics[width=.25\textwidth]{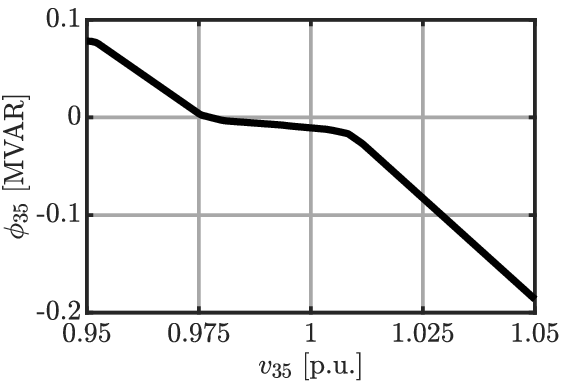}
    \hspace{-.15in}
    \includegraphics[width=.25\textwidth]{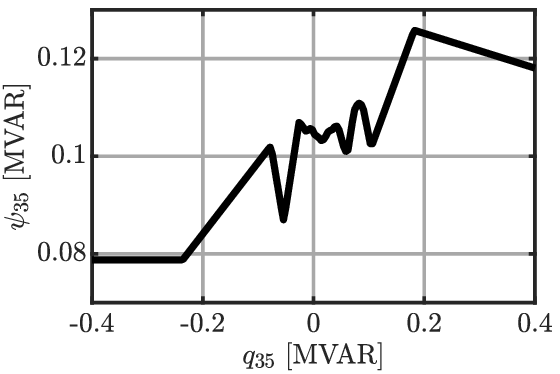}
    \label{fig:global-big-gen}
    }\\
    \subfigure[RP-SC]{
    \includegraphics[width=.25\textwidth]{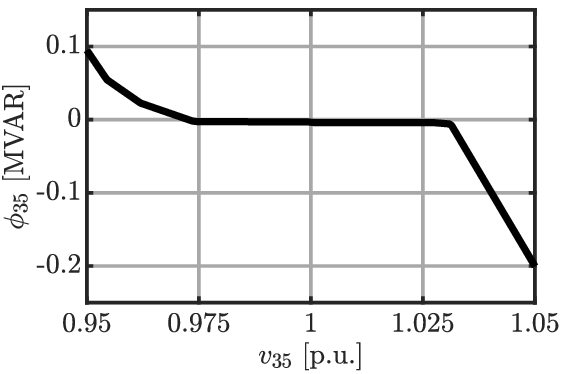}
    \hspace{-.15in}
    \includegraphics[width=.25\textwidth]{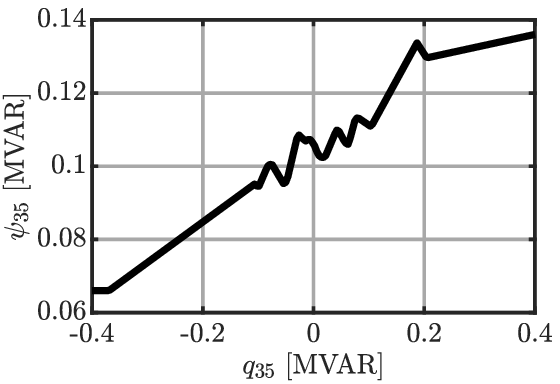}
    \label{fig:local-big-gen}
    }
    \caption{Leaned functions $\phi_{35}$ and $\psi_{35}$ of node 35 for \texttt{Case-1} under (a) CVP-SC and (b) RP-SC.}
    \label{fig:func-big-gen}
\end{figure}

\begin{figure}[ht]
    \centering
    \subfigure[CVP-SC]{
    \includegraphics[width=.25\textwidth]{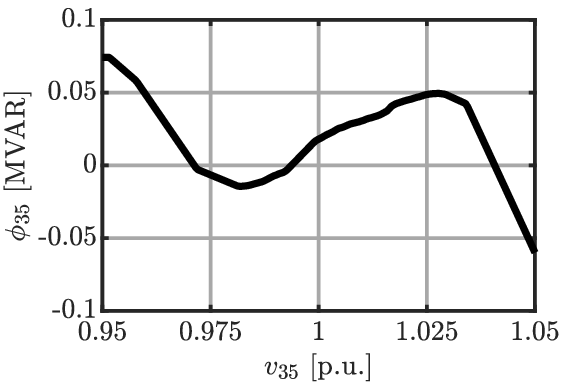}
    \hspace{-.15in}
    \includegraphics[width=.25\textwidth]{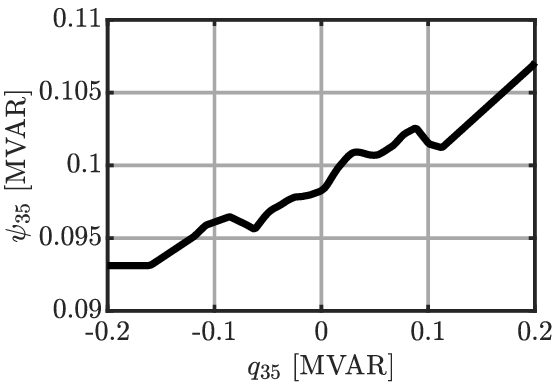}
    \label{fig:global-small-gen}
    }\\
    \subfigure[RP-SC]{
    \includegraphics[width=.25\textwidth]{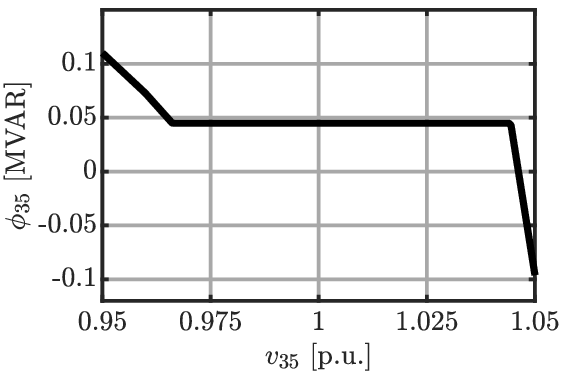}
    \hspace{-.15in}
    \includegraphics[width=.25\textwidth]{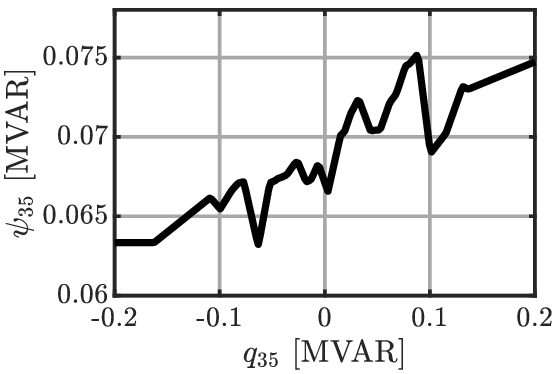}
    \label{fig:local-small-gen}
    }
    \caption{Leaned functions $\phi_{35}$ and $\psi_{35}$ of node 35 for \texttt{Case-2} under (a) CVP-SC and (b) RP-SC.}
    \label{fig:func-small-gen}
\end{figure}

\subsubsection*{Control Performance}
We test the control performance of the proposed methods under CVP-SC and RP-SC. 
Although our stability analysis is done for the linearized power flow model, here we employ \textsc{Matpower}~\cite{RDQ-CEMS-RJT:11} to solve the AC power flow to run the simulations.
Figs.~\ref{fig:react-big-gen} and~\ref{fig:react-small-gen}
report the evolution of the DERs’ reactive power injections for both \texttt{Case-1} and \texttt{Case-2} to show the stability of the control algorithms.
Finally, we test the proposed methods in a scenario where load-generation profiles are time-varying. Specifically, we randomly perturb (5\%) the consumption data from 15:00 to 17:00 to obtain the testing load-generation profiles. 
We set $\epsilon = 0.1$ and consider 120 iterations of~\eqref{eq:bus_react_upd} using the controllers developed under CVP-SC and RP-SC.
Table~\ref{tab:loss} summarizes the average distances between the actual reactive power setpoints and the ORPF solutions, i.e., $\| \qbf_\Cc - \qbf_\Cc^{\star} \|$. The performance displayed here by CVP-SC and RP-SC illustrates their respective advantages in different DG cases and their significant  improvement compared to the baseline~\cite{ZY-GC-MKS-JC:22-tps}.
Notably, we observe that the performance improvement achieved by CVP-SC in \texttt{Case-2} is greater than that by RP-SC in \texttt{Case-1}. This is because in \texttt{Case-2} CVP-SC enjoys the performance improvement  from the inclusion of reactive power as an argument of the equilibrium function as well as the more flexible shape of equilibrium function. Instead, in \texttt{Case-1}, the performance improvement achieved by RP-SC is only due to the inclusion of reactive power as argument of the equilibrium function.

\begin{figure}[ht]
    \centering
    \subfigure[CVP-SC]{
    \includegraphics[width=.245\textwidth]{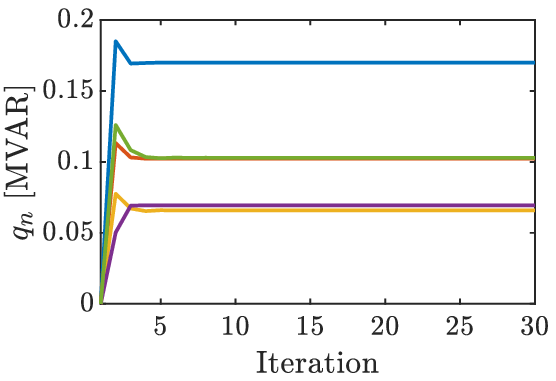}
    \label{fig:react-global-big-gen}
    }
    \hspace{-.3in}
    \subfigure[RP-SC]{
    \includegraphics[width=.245\textwidth]{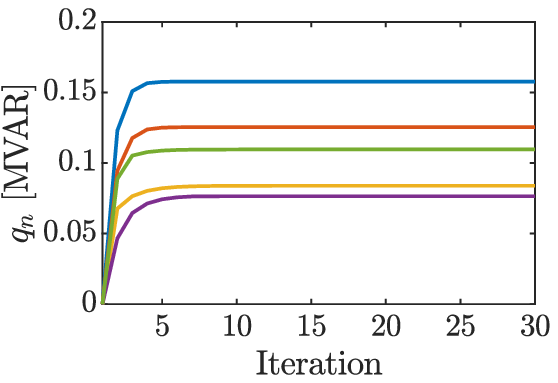}
    \label{fig:react-local-big-gen}
    }
    \caption{Evolution of reactive power setpoints for \texttt{Case-1} with 30 iterations of~\eqref{eq:bus_react_upd} using load-generation profiles of 1095-th minute. For CVP-SC, $\epsilon$ is chosen as 1, and for RP-SC, $\epsilon$ is chosen as $0.79$, which satisfies $\epsilon < \frac{2}{L_{\psib} + L_{\phib}\|\Xbf\| + 1} = 0.7916$ in Theorem~\ref{thm:local-stability}.}
    \label{fig:react-big-gen}
    \vspace*{-1.5ex}
\end{figure}
\begin{figure}[ht]
    \centering
    \subfigure[CVP-SC]{
    \includegraphics[width=.245\textwidth]{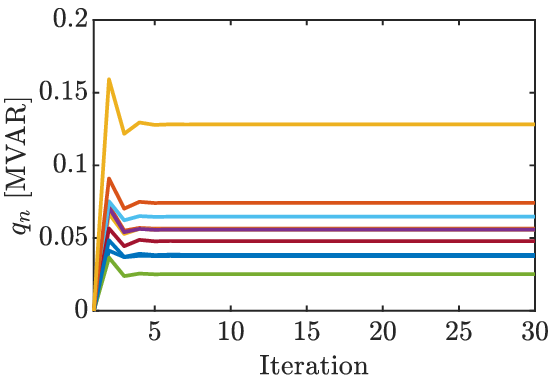}
    \label{fig:react-global-small-gen}
    }
    \hspace{-.3in}
    \subfigure[RP-SC]{
    \includegraphics[width=.245\textwidth]{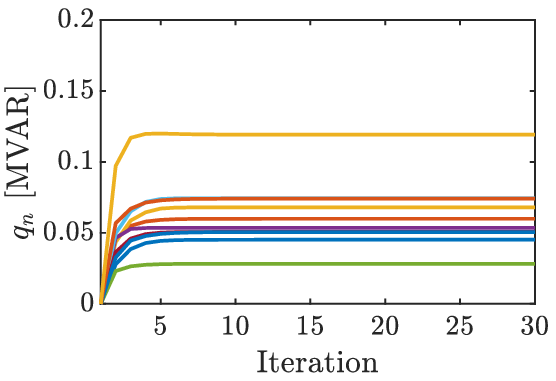}
    \label{fig:react-local-small-gen}
    }
    \caption{Evolution of reactive power setpoints for \texttt{Case-2} with 30 iterations of~\eqref{eq:bus_react_upd} using load-generation profiles of 1095-th minute. For CVP-SC, $\epsilon$ is chosen as 1, and for RP-SC, $\epsilon$ is chosen as $0.64$, which satisfies $\epsilon < \frac{2}{L_{\psib} + L_{\phib}\|\Xbf\| + 1} = 0.6471$ in Theorem~\ref{thm:local-stability}.}
    \label{fig:react-small-gen}
        \vspace*{-1.5ex}
\end{figure}

\section{Conclusions}
We have developed a learning method for synthesizing provably stable local Volt/Var controllers for efficient network operation of distribution grids. We proposed an incremental control algorithm steering the network towards configurations defined by functions termed equilibrium functions, which depend both on local voltages and local reactive powers. We identified two sets of slope constraint conditions on the equilibrium functions to ensure the stability of the algorithm. The theoretical analysis and simulation results illustrate the trade-offs between the two types of conditions, as the reactive  power slope constraint  is better in DGs with relatively large-size generators and coupled voltage-power slop constraint is more suitable in DGs with relatively small-size generators. 
Future work will explore more general forms of the equilibrium functions, relax the assumptions on its components, and extend our analysis to Lipschitz equilibrium functions.

\bibliographystyle{ieeetr}

\begin{thebibliography}{10}

\bibitem{Mohammadi_2017_TPD}
P.~Mohammadi and S.~Mehraeen, ``Challenges of {PV} integration in low-voltage
  secondary networks,'' {\em {IEEE} Trans. Power Del.}, vol.~32, no.~1,
  pp.~525--535, 2017.

\bibitem{SHL:14}
S.~H. Low, ``Convex relaxation of optimal power flow - {P}art {I}: Formulations
  and equivalence,'' {\em {IEEE} Trans. Control Netw. Syst.}, vol.~1, no.~1,
  pp.~15--27, 2014.

\bibitem{XP-TZ-MC-SZ:21}
X.~Pan, T.~Zhao, M.~Chen, and S.~Zhang, ``Deep{OPF}: A deep neural network
  approach for security-constrained {DC} optimal power flow,'' {\em {IEEE}
  Trans. Power Syst.}, vol.~36, no.~3, pp.~1725--1735, 2021.

\bibitem{MKS-SG-VK-GC-AB:20}
M.~K. Singh, S.~Gupta, V.~Kekatos, G.~Cavraro, and A.~Bernstein, ``Learning to
  optimize power distribution grids using sensitivity-informed deep neural
  networks,'' in {\em Proc. {IEEE} Intl. Conf. on Smart Grid Commun.}, (Tempe,
  AZ, USA), Nov. 2020.

\bibitem{ED-AS:18}
E.~Dall'Anese and A.~Simonetto, ``Optimal power flow pursuit,'' {\em {IEEE}
  Trans. Smart Grid}, vol.~9, no.~2, pp.~942--952, 2018.

\bibitem{AB-EDA:19}
A.~Bernstein and E.~Dall'Anese, ``Real-time feedback-based optimization of
  distribution grids: a unified approach,'' {\em {IEEE} Trans. Control Netw.
  Syst.}, vol.~6, pp.~1197--1209, 2019.

\bibitem{GQ-NL:20}
G.~Qu and N.~Li, ``Optimal distributed feedback voltage control under limited
  reactive power,'' {\em {IEEE} Trans. Power Syst.}, vol.~35, no.~1,
  pp.~315--331, 2020.

\bibitem{IEEE1547}
``{IEEE} standard for interconnection and interoperability of distributed
  energy resources with associated electric power systems interfaces,'' {\em
  IEEE Std 1547-2018 (Revision of IEEE Std 1547-2003)}, 2018.

\bibitem{XZ-MF-ZL-LC-SHL:21}
X.~Zhou, M.~Farivar, Z.~Liu, L.~Chen, and S.~H. Low, ``Reverse and forward
  engineering of local voltage control in distribution networks,'' {\em {IEEE}
  Trans. Autom. Contr.}, vol.~66, no.~3, pp.~1116--1128, 2021.

\bibitem{SB-RC-GC-SZ:19}
S.~Bolognani, R.~Carli, G.~Cavraro, and S.~Zampieri, ``On the need for
  communication for voltage regulation of power distribution grids,'' {\em
  {IEEE} Trans. Control Netw. Syst.}, vol.~6, no.~3, pp.~1111--1123, 2019.

\bibitem{WC-JL-BZ:22}
W.~Cui, J.~Li, and B.~Zhang, ``Decentralized safe reinforcement learning for
  inverter-based voltage control,'' {\em Electr. Power Syst. Res.}, vol.~211,
  p.~108609, 2022.

\bibitem{YS-GQ-SL-AA-AW:22}
Y.~Shi, G.~Qu, S.~H. Low, A.~Anandkumar, and A.~Wierman, ``Stability
  constrained reinforcement learning for real-time voltage control,'' in {\em
  Proc. Amer. Control Conf.}, (Atlanta, GA), pp.~2715--2721, June 2022.

\bibitem{GC-ZY-MKS-JC:22-cdc}
G.~Cavraro, Z.~Yuan, M.~K. Singh, and J.~Cort\'es, ``Learning local
  {V}olt/{V}ar controllers towards efficient network operation with stability
  guarantees,'' in {\em Proc. {IEEE} Conf. on Decision and Control}, (Cancun,
  Mexico), pp.~5056--5061, Dec. 2022.

\bibitem{ZY-GC-MKS-JC:22-tps}
Z.~Yuan, G.~Cavraro, M.~K. Singh, and J.~Cort\'es, ``Learning provably stable
  local {V}olt/{V}ar controllers for efficient network operation,'' {\em {IEEE}
  Trans. Power Syst.}, 2023.
\newblock To appear.

\bibitem{SK-PA-GH:19}
S.~Karagiannopoulos, P.~Aristidou, and G.~Hug, ``Data-driven local control
  design for active distribution grids using off-line optimal power flow and
  machine learning techniques,'' {\em {IEEE} Trans. Smart Grid}, vol.~10,
  no.~6, pp.~6461--6471, 2019.

\bibitem{PST-PHG:22}
P.~S. Torre and P.~Hidalgo-Gonzalez, ``Decentralized optimal power flow for
  time-varying network topologies using machine learning,'' {\em Electr. Power
  Syst. Res.}, vol.~212, p.~108575, 2022.

\bibitem{XS-JQ-JZ:21}
X.~Sun, J.~Qiu, and J.~Zhao, ``Optimal local {V}olt/{V}ar control for
  photovoltaic inverters in active distribution networks,'' {\em {IEEE} Trans.
  Power Syst.}, vol.~36, no.~6, pp.~5756--5766, 2021.

\bibitem{AMK-MP:17}
A.~M. Kettner and M.~Paolone, ``On the properties of the compound nodal
  admittance matrix of polyphase power systems,'' {\em {IEEE} Trans. Power
  Syst.}, vol.~34, no.~1, pp.~444--453, 2019.

\bibitem{HZ-HJL:15}
H.~Zhu and H.~J. Liu, ``Fast local voltage control under limited reactive
  power: {O}ptimality and stability analysis,'' {\em {IEEE} Trans. Power
  Syst.}, vol.~31, no.~5, pp.~3794--3803, 2015.

\bibitem{AG-JD:03}
A.~Granas and J.~Dugundji, {\em Fixed Point Theory}, vol.~14.
\newblock Springer, 2003.

\bibitem{RAH-CRJ:12}
R.~A. Horn and C.~R. Johnson, {\em Matrix Analysis}.
\newblock Cambridge University Press, 2012.

\bibitem{NB-RC-LS:18}
N.~Bof, R.~Carli, and L.~Schenato, ``Lyapunov theory for discrete time
  systems,'' {\em Technical Report}, 2018.
\newblock Available at \url{https://arxiv.org/abs/1809.05289}.

\bibitem{DPK-JB:15}
D.~P. Kingma and J.~Ba, ``Adam: {A} method for stochastic optimization,'' in
  {\em Intl. Conf. for Learning Representations}, (San Diego, CA), May 2015.

\bibitem{MG-SB:14-cvx}
M.~Grant and S.~Boyd, ``{CVX}: Matlab software for disciplined convex
  programming, version 2.1,'' Mar. 2014.
\newblock Available at \url{http://cvxr.com/cvx}.

\bibitem{RDQ-CEMS-RJT:11}
R.~D. Zimmerman, C.~E. Murillo-S\'{a}nchez, and R.~J. Thomas, ``Matpower:
  {S}teady-state operations, planning and analysis tools for power systems
  research education,'' {\em {IEEE} Trans. Power Syst.}, vol.~26, no.~1,
  pp.~12--19, 2011.

\end{thebibliography}

\end{document}